\newtheorem{remark}{Remark}
\newtheorem{theorem}{Theorem}
\newcommand{\pip}{\frac{\pi_0}{\pi_1}}
\begin{document}
\title{Adaptive Molecule Transmission Rate for Diffusion Based Molecular Communication}
\author{\IEEEauthorblockN{Mohammad Movahednasab\IEEEauthorrefmark{1},
Mehdi Soleimanifar\IEEEauthorrefmark{1}, Amin Gohari\IEEEauthorrefmark{1},\\
Masoumeh Nasiri Kenari\IEEEauthorrefmark{1} and Urbashi Mitra\IEEEauthorrefmark{2}}
\IEEEauthorblockA{\IEEEauthorrefmark{1} Sharif University of Technology \IEEEauthorrefmark{2}University of Southern California (USC)}}
\maketitle
\begin{abstract}
In this paper, a simple memory limited transmitter for molecular communication is proposed, in which information is encoded in the diffusion rate of the molecules. Taking advantage of memory, the proposed transmitter reduces the ISI problem by properly adjusting its diffusion rate. The error probability of the proposed scheme is derived and the result is compared with the lower bound on error probability of the optimum transmitter. It is shown that the performance of introduced transmitter is near optimal (under certain simplifications). Simplicity is the key feature of the presented communication system: the transmitter follows a simple rule, the receiver is a simple threshold decoder and only one type of molecule is used to convey the information. 
\end{abstract}
\section{Introduction}
\label{sec:intro}
New applications such as smart drug delivery and health monitoring give rise to the importance of molecular communication, a new paradigm for communication between nanomachines over a short (nanoscale or microscale) range. In molecular communication, information is carried by molecules, rather than electrons or electromagnetic waves \cite{Book1, Book2}.
Several types of molecular communication have been considered, among them, diffusion based communication, which corresponds to traditional wireless communication\cite{AAA}, is of great interest, since it does not require any prior communication link infrastructure. In diffusion based communication, the transmitter nanomachine releases information molecules in the environment. These released molecules diffuse randomly until they hit the receiver nanomachine. \cite{Pierboon2010}, \cite{Eckford2007}.
Due to the random nature of molecular propagation, diffusion based communication suffers from inter symbol interference (ISI). Several solutions have been proposed to mitigate ISI (e.g. see \cite{Leeson2012, Atakan2012, Ko2012, Shih2012}). In \cite{AR}, a new modulation technique, named Molecular Concentration Shift Keying (MCSK), is suggested. Exploiting two types of molecules, while MCSK eliminates the interference from the last transmitted symbol and reduces the error probability, it suffers from interference due to earlier transmissions. A solution based on adding intelligence to receiver is suggested in \cite{MO}. where the receiver stores the last decoded bits in memory to make an estimation of current interference level, and uses this estimation to adjust the threshold for decoding the current bit. In \cite{SH}, a linear and time invariant model is presented and the optimal receiver is derived, under this model. However, this receiver is too complex to be implemented in practice. In \cite{fekri}, the authors considered a deterministic noiseless diffusion channel with memory, and proposed using different symbol durations to deal with ISI by taking into account the channel binary concentration state. They then computed the channel capacity by adapting the Shannon telegraph channel method.
In this paper, we propose a simple transmitter which significantly reduces ISI by adaptively adjusting transmission rates to stabilize the rate of molecules at the receiver, enabling the use of a simple fixed threshold receiver. To this end, the transmitter exploits memory to keep a partial transmission history, so that it can estimate the interference rate that will be experienced at the receiver side, in order to determine a proper diffusion rate for the current transmission.
It is shown that the proposed transmitting protocol is near optimal by obtaining a tight lower bound on the error probability and comparing its performance with the lower bound. 
The rest of this paper is organized as follows: the system model is described in Section \ref{sec:sysmod} and the proposed transmitter is introduced in Section \ref{sec:adapttrans}. In Section \ref{sec:performance}, performance bounds are obtained and it is shown that the proposed transmitter is near optimal, and finally, the paper is concluded in Section \ref{sec:conclu}.
Throughout this paper all the logarithms are in base $e$.
\section{System Model}
\label{sec:sysmod}
\subsection{Transmitter and Receiver Model}
We consider the communication model described in \cite{MO}-\cite{AR}. Transmission occurs in equal time slots $T_s$, called the symbol duration. The input bit stream ($b_1,b_2,\cdots$) comprises of i.i.d. $\mbox{Bernoulli}(1/2)$ rv's. At the beginning of each time slot $i$, a number of molecules is released by the transmitter that is approximated as
Poisson variable, i.e., $\mbox{Poisson}(X_i)$, where the rate $X_i$ is determined by the information that the transmitter wishes to transmit. In this work, in contrast to \cite{MO}-\cite{AR}, the transmitter is an intelligent device consisting of $M$ bits of memory, a transmission function, and a memory updating function, as shown in Fig. \ref{transmodel}. The transmitter produces $X_i$ based on the current information symbol $b_i$ and the $M$ bits that are stored in the memory.
\begin{figure}
\centering
\includegraphics[width=3.5in]{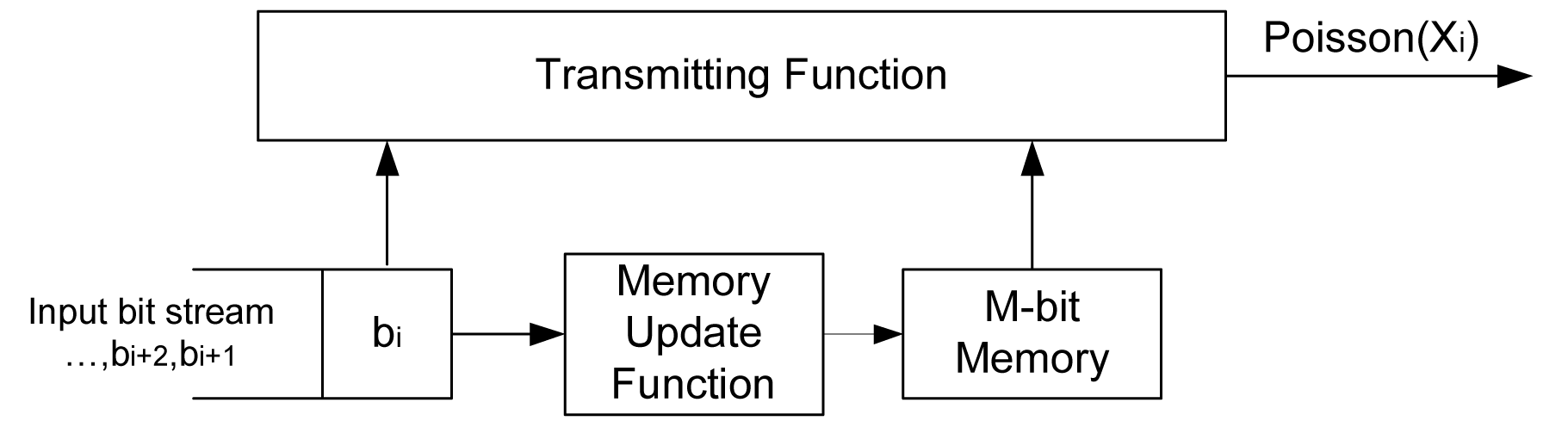}
\caption{Transmitter Model}
\label{transmodel}
\end{figure}
The receiver is as simple as possible, it is just a threshold decoder with a fixed threshold. So, all the intelligence is kept in the transmitter to make the receiver simple.
\subsection{Channel Model}
We consider a diffusion based model for communication. That is, molecules are freely released in the fluid where they propagate via Brownian motion. Although highly random, Brownian motion is always available, and has the advantage of zero energy propagation cost \cite{Book2}. The released molecules continuously diffuse until they hit the receiver, once they reach the receiver they will be absorbed and removed from the media. Let us denote by $\Pi = [ \pi_0,\pi_1,...] $ the sequence of hitting probabilities in consecutive time slots, i.e., $\pi_i$ is the probability that a released molecule at the beginning of the $k$-th time slot arrives at the receiver during the $k+i$-th time slot, $i=0,1,2,\cdots$. In the simple case of one dimensional Brownian motion in a uniform medium, it is shown that the first arrival time follows an Inverse Gaussian distribution\cite{Book1} and the vector $\Pi$ can be obtained easily (related equations can be found in\cite{MO}). There may exist other uninvited sources of molecules of the same type as our information molecules, which are treated as noise, and is modeled by a Poisson random variable with parameter $\lambda_0$. Considering the channel characteristics just described, the channel output in each time slot is influenced by three factors: \begin{inparaenum}[(i)]
\item Input rate in the current time slot transmission ($X_i$), 
\item Input rates of previous transmissions ($X_{i-1},X_{i-2},...$),
\item The noise parameter $\lambda_0$.
\end{inparaenum}
Using the thinning property of Poisson distribution and the fact that sum of independent Poisson rv's is itself a Poisson, the channel output is a Poisson rv, and is found in \cite{AR} as follows:
\begin{align} 
\begin{split}
Y_i & \sim \mathsf{Poisson}\left(\pi_0 X_i +\sum_{k=1}^{\infty}\pi_k X_{i-k}+\lambda_0\right)
\\&= \mathsf{Poisson}(\pi_0 X_i +I_i+\lambda_0),
\label{channeloutput}
\end{split}
\end{align}
where $I_i$ denotes the interference term at $i$th time slot. We say that the channel has memory $M_c$ if \begin{align}\pi_k=0,\qquad \forall k>M_c.\label{eqn:memorychannel}\end{align}
\section{The ISI problem and Adaptive Diffusion Rate}
\label{sec:adapttrans}
Consider a conventional OOK (On-Off Keying) modulator, where the transmitter sends a constant rate of molecules for bit `1' and nothing for bit `0'. The performance of this modulator (which is commonly used in most positive systems like optical communications) is degraded in diffusion based channels due to the ISI effect.\footnote{A positive system is a channel that only accepts non-negative inputs.} This is due to the fact that the rate of received molecules at the destination depends not only on the current transmission, but also on previously transmitted bits (or equivalently rates). Fig. 2 shows a sample of the received rate of molecules at the destination. One can realize from this figure that a constant threshold decoder is not an optimal or near optimal solution for this channel. Even though we are using only two diffusion rates at the transmitter for bits `0' and `1', the absorption rates at the receiver are not fixed. 
We now introduce our simple modulation: we use a constant threshold decoder at the receiver, but to avoid the ISI problem mentioned above, we modify the OOK strategy by adapting the current transmission rate to the expected value of interference at the receiver, i.e., for transmitting bit `1', we diffuse less molecules if we expect a high concentration of molecules at the receiver due to previous transmissions; for bit `0', we send nothing. 
\begin{figure}[!t]
\centering
\includegraphics[width=3.5in]{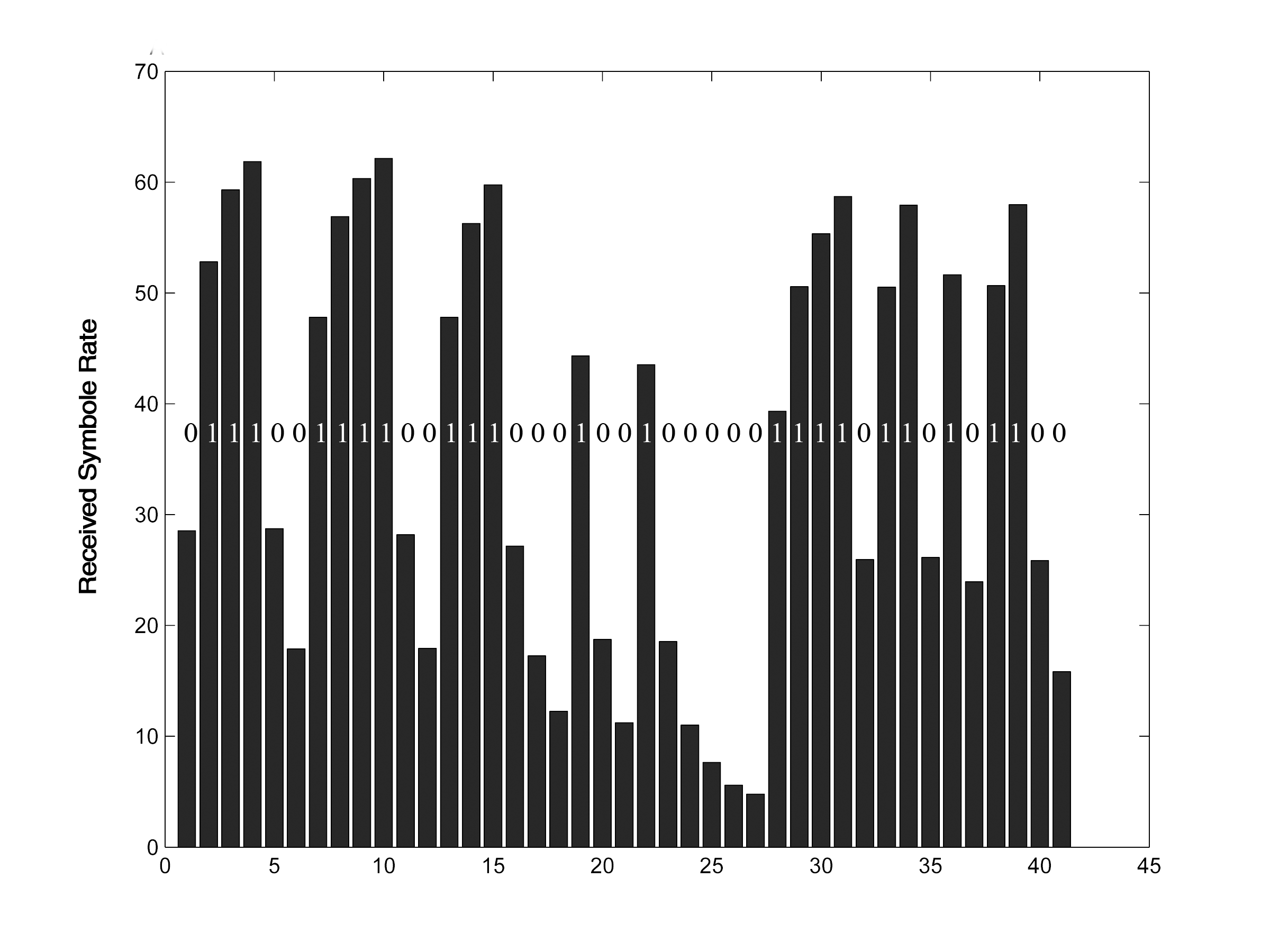}
\caption{Sample Received Mean Number of Molecules}
\end{figure}
To achieve the expected absorption rate of molecules at the receiver, the transmitter needs infinite bits of memory to remember all previously transmitted symbols which affect the current transmission, in order to estimate the interference level at the receiver side, and hereby determine the rate of molecules that should be released in the current time slot for bit one. To make the suggested scheme practical, we need to adapt it to a limited memory system, in which only a few bits of memory is available. As a result, the transmitter cannot estimate the exact value of the interference $I_i$ in Eq. \eqref{channeloutput}; however, we will see that limited memory systems reach a performance close to that of unlimited memory. 
Suppose a transmitter with $M$ bits of memory is available. Our proposed encoder uses the memory to store the last $M$ transmitted information bits, which indicates the transmitter's state. A typical transmission function can be considered as follows: a transmitting rate is assigned to each $2^{M}$ states of memory, such that if the current input bit is `1', the rate corresponding to the state is selected for the transmission, and if the current input bit is `0', nothing is transmitted. The proposed transmitter can be modeled as finite state machine with $2^M$ states. Fig. 3 shows the state diagram for a transmitter with 2 bits memory, in which $L_{ij}\;i,j\in\{0,1\}$ denotes the rate of molecules released, if the input bit is `1' and the transmitter state is $S_{ij}\;i,j\in\{0,1\}$, noting that nothing is transmitted for bit `0'. The subindexes $i,j$ in the above notation are equal to two last transmitted information bits.
\begin{figure}
\centering
\begin{tikzpicture}[->,>=stealth',shorten >=1pt,auto,node distance=3.1cm,
thick,main node/.style={circle,draw,font=\sffamily\Large\bfseries}]
\node[main node] (1) {$S_{00}$};
\node[main node] (2) [below of=1] {$S_{01}$};
\node[main node] (3) [right of=1] {$S_{11}$};
\node[main node] (4) [right of=2] {$S_{10}$};
\path[every node/.style={font=\sffamily\small}]
(1) edge node[left,near end] {$1/L_{00}$} (2)
edge [loop above] node [near end]{$0/0$} (1)
(2) edge node [very near end]{$1/ L_{01}$} (3)
edge[bend right] node [very near end]{$0/0$} (4)
(3) edge [loop above] node [near end]{$1/ L_{11}$} (3)
edge node[near end]{$0/0$} (4)
(4) edge [bend right] node [very near end] {$1/ L_{10}$} (2)
edge node[right,very near end] {$0/0$} (1);
\end{tikzpicture}
\caption{Transmitter State Diagram}
\end{figure}
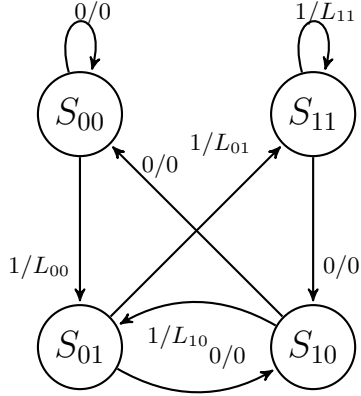
The only remaining step is to define levels ($L_{ij}$) such that for each bit `1', we get close to a constant rate at the receiver. The exact value of previous diffusion rates is not known, but the knowledge of last $M$ transmitted bits gives us partial information about previously transmitted rates. To clarify the concept, suppose only two bits of memory are available ($M=2$), and the last two transmitted bits are `1' which means we are in state $S_{11}$, then we are sure that the last transmitted rate is either $L_{01}$ or $L_{11}$ each with probability $1/2$, depending on whether the third earlier bit is zero or one respectively. We can consider the expected value, i.e., $(L_{01}+L_{11})/2$, for the last transmitted rate. For each memory state $S_{ij} \;\;i,j\in\{0,1\}$, we can write down the expected value of the rate of molecules that will be received at the destination conditioned on being on that state, and by setting all expected rates equal to a preselected constant, we get a system of linear equations. By solving these equations simultaneously, transmitting rates corresponding to different states are obtained, and hereby the transmission function is determined. 
The set of equations for a transmitter with two bits memory that stores bits $(B_{i-1}=b_{i-1}, B_{i-2}=b_{i-2})$, assuming the channel has memory equal to $M_c$ (see Eq. \ref{eqn:memorychannel}), is derived as follows: when $B_i=1$, we transmit $$L_{b_{i-1}, b_{i-2}}=\frac{C-\mathbb{E}[I_i|B_{i-1}=b_{i-1}, B_{i-2}=b_{i-2}]}{\pi_0},$$ where $C$ is a constant (preselected rate that is expected to be achieved at the destination for each information bit `1') and $\mathbb{E}[I_i|B_{i-1}=b_{i-1}, B_{i-2}=b_{i-2}]$ is the conditional expected value of interference. Using the fact that $X_{i-k}$ is a function of $(B_{i-k}, B_{i-k-1}, B_{i-k-2})$, we can compute this conditional expected value as follows:\small
\begin{align*}&\mathbb{E}[I_i|B_{i-1}=b_{i-1}, B_{i-2}=b_{i-2}]
\\&=\sum_{k=1}^{\infty}\pi_k \mathbb{E}[X_{i-k}|B_{i-1}=b_{i-1}, B_{i-2}=b_{i-2}]
\\&=\pi_1\mathbb{E}[X_{i-1}|B_{i-1}=b_{i-1}, B_{i-2}=b_{i-2}]
\\&\quad+\pi_2\mathbb{E}[X_{i-2}|B_{i-2}=b_{i-2}]
+\sum_{k=3}^{\infty}\pi_k \mathbb{E}[X_{i-k}]
\\&=\pi_1L_{b_{i-1},b_{i-2}}/2
+\pi_2(L_{b_{i-2},0}+L_{b_{i-2},1})/4
+\sum_{k=3}^{\infty}\pi_k \mathbb{E}[X_{i-k}].
\end{align*} 
\normalsize
Simulation results (Fig. \ref{comp}) show that a limited memory transmitter with only two bits of memory can reach the performance of the unlimited memory transmitter, and the performance of our simple system is comparable to the system with memory at the receiver side introduced in \cite{MO}, which is more complex. In Section \ref{sec:performance}, we show that this transmission protocol is near optimal under certain assumptions.
\section{Performance Bound and Numerical Results}
\label{sec:performance}
In this section, we are interested in evaluating the performance of proposed transmission protocol. To understand how well the proposed scheme performs, we need to compare its bit error probability with that of the transmitter with an optimal transmission function, i.e., transmitter with minimum error probability. Unfortunately, deriving the optimal transmission function by minimizing the error probability is not a simple task, particularly when we are dealing with fixed values of memory, since the decision on the transmitting rate in each time slot affects not only the current time slot, but also all following time slots, which makes the problem highly complicated. As a result, we provide a combination of partial results, insights and simulation to make a case for this modulation scheme. 
We first make an observation in Subsection IV. A where we show that under a simplifying assumption the optimal transmission function matches the proposed one at all points in the real line except for possibly one point. In Subsection IV. B we derive a lower bound on the error probability for the special case of channels with one symbol memory, and we will see that our system performance is close enough to the lower bound. Further, it is shown that the optimum transmission function must send zero when input $X_i$ is zero, thus confirming our choice for this. Numerical results are given in Subsection IV. C.
\subsection{Fixed Interference distribution} 
For now, assume that the transmitter has unlimited memory, so it knows the exact value of the interference rate, $I_i$, experienced at the receiver in each time slot. Let us denote the transmission function by $f:\mathbb{R}_{+}\mapsto\mathbb{R}_{+}$ which maps the interference rate to a transmitting rate, such that the bit error probability is minimized. That is for $b_i=1$, we select $X_i=f(I_i)$, where $I_i$ is the current level of the interference at the receiver side, and for $b_i=0$ we transmit nothing, i.e. $X_i=0$. It is necessary to assume a constraint on the average transmission rate for bit one, i.e., if we use the channel $n$ times we should have $\frac{1}{n}\sum_{i=1}^nX_i\leq K$ for some constant $K$; this should be interpreted as an input power constraint. 
The distribution of interference itself is determined by our choice of $f$. However, let us fix some distribution $q_I$ on $I$ and ask for the best function for that distribution. This can give us some insights about $f$ in general. That is given $q_I$ we are interested in a function $f$ such that $\mathbb{E}_q[X]=\mathbb{E}_q[f(I)]= K$. From Eq. (\ref{channeloutput}) for a threshold decoder with threshold equal to $T$, the error probability is equal to
\begin{align}
\begin{split}
P_e&=\frac{1}{2}P_{e|0}+\frac{1}{2}P_{e|1}\\
&=\mathbb{E}_I\Bigg[\frac{1}{2}\sum_{y>T}e^{-I}\frac{I^y}{y!}+\frac{1}{2}\sum_{y<T}e^{-I-\pi_0f(I)}\frac{(I+\pi_0f(I))^y}{y!} \Bigg ].
\label{errorprob}
\end{split}
\end{align}
We would like to select $f(\cdot)$ in a way to minimize Eq. (\ref{errorprob}). Alternatively, we would like to minimize
\begin{align} 
\begin{split}
\mathbb{E}_I\left[e^{-I-\pi_0f(I)}\sum_{0\leq y<T}\frac{(I+\pi_0f(I))^y}{y!}\right],
\end{split}
\label{fixedDistMin}
\end{align}
subject to $\mathbb{E}[f(I)]= K$.
Using the Lagrange multiplier technique we show in Appendix \ref{fixeddist} that there exists some $C>0$ such that the optimum $f(\cdot)$ is equal to 
$$f(i)=\frac{C-i}{\pi_0},$$
for all $i\geq 0$, except for possibly one particular $i^*$. Thus, at the transmitter $\pi_0 f(i) + i$ is kept constant for almost all possible values of interference $I=i$, matching our proposed scheme. This statement holds universally for any arbitrary fixed distribution on interference. 
Although we have simplified the original problem, but the results obtained in this section provides insights about the solution of the original problem.
\subsection{Lower Bound on Error Probability}
In this section, we derive a lower bound on the error probability. To track the dependence of the distribution of the interference in terms of the transmission function, we restrict the channel to be one with only one symbol memory. i.e., $M_C=1$ and $\pi_i=0$ for all $i\geq 2$. In this case, at time slot $j$, interference value $I_j$ is equal to $\pi_1 X_{j-1}$. Here, we also assume that the released molecules have a higher probability of arriving at the receiver in their transmission time slot compared to the next time slot, i.e., $\pi_0>\pi_1\geq \pi_2=\pi_3=\cdots=0$.
The transmitter works as before: if the input bit is `1', the transmission function determines the rate of released molecules, and if the input bit is `0', nothing is transmitted. In this section, the transmitter is assumed to know the exact value of the interference. 
\begin{remark}
One might consider releasing a constant rate of molecules for bit `0' instead of nothing. We prove in Appendix \ref{remark} that, in an optimal transmission function, nothing should be transmitted for bit `0'.
\end{remark}
Since the channel is assumed to have memory one, any `0' input would clear the channel memory. Taking this point into consideration, we model the transmitter as an infinite state machine, with each state $S_i^{'},\;i\in\{0,1,..\}$, the subscript $i$ describes how many `1's were transmitted in sequence after a memory reset via a `0' bit., i.e. the last $i+1$ bits at the current time slot $j$ are $(b_{j-i},\cdots, b_j)=(0,1,1,\cdots,1)$. The transmitter state diagram is shown in Fig. \ref{tsd}.\\
\begin{figure}\begin{center}
\begin{tikzpicture}[>=stealth',shorten >=1pt,auto,node distance=2cm]
\node[state] (S) {$S_0^{'}$};
\node[state] (q1) [right of=S] {$S_1^{'}$};
\node[state] (q2) [right of=q1] {$S_2^{'}$};
\node[state] (q3) [right of=q2] {$S_3^{'}$};
\node (q4) [right of=q3] {$...$};
\path[->]
(S) edge [loop above] node {$\frac{1}{2}$} (S)
edge node {$\frac{1}{2}$} (q1)
(q1) edge [bend left] node {$\frac{1}{2}$} (S)
edge node {$\frac{1}{2}$} (q2)
(q2) edge node {$\frac{1}{2}$} (q3)
edge [bend left] node {$\frac{1}{2}$} (S)
(q3) edge node {$\frac{1}{2}$} (q4)
edge [bend left] node {$\frac{1}{2}$} (S);
\end{tikzpicture}\end{center}
\caption{Transmitter State Diagram}
\label{tsd}
\end{figure}
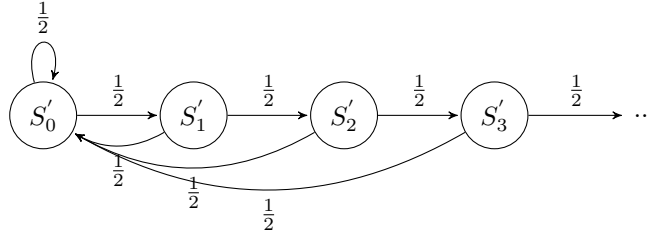
Let us denote by $a_i$ the interference rate at the receiver side, when the transmitter is in state $S_i^{'}$. The choice of the transmission function determines the value of $a_i,\;i\in{1,2,..}$. More precisely, assume that in time slot $j$, we are in state $S_i^{'}$ and the interference value is $I_j=a_i$. If $b_j=0$, we send nothing and the state is reset to $S_0^{'}$. We send nothing in this slot and the channel memory, $I_{j+1}=\pi_1X_{j}=0$. Thus, $a_0=0$. However if $b_{j}=1$, we transmit $X_j=f(a_i)$ and move from state $S_i^{'}$ to $S_{i+1}^{'}$. The interference value at state $S_{i+1}^{'}$ will be then equal to $I_{j+1}=\pi_1X_{j}=\pi_1 f(a_i)$. Then,
$$a_{i+1}=\pi_1 f(a_i).$$
From Fig. \ref{tsd}, it can be easily shown that the probability of being in each state $S_i^{'}$ equals $P_i=\big(1/2\big)^{i+1}$ for $i\in\{0,1,..\}$. To determine the optimum transmission function, we must first compute the bit error probability, which, conditioned on the value of the state, is equal to
\begin{align} 
\begin{split}
\mathbb{P}_e=& \dfrac{1}{2}P_{e|1}+\dfrac{1}{2}P_{e|0}\\
=&\dfrac{1}{2}\sum_{i=0}^{\infty}P_i\left[e^{-(a_i+\dfrac{\pi_0}{\pi_1}a_{i+1})}\sum_{ y\leq T}\dfrac{(a_i+\dfrac{\pi_0}{\pi_1}a_{i+1})^y}{y!}\right]\\
&+\dfrac{1}{2}\sum_{i=0}^{\infty}P_i\left[e^{-a_i}\sum_{y>T}\frac{a_i^y}{y!}\right],
\label{errorprob2}
\end{split}
\end{align}
where $P_i=\big(1/2\big)^{i+1}$ was the probability of being in state $S_i^{'}$, and $\pip a_{i+1}=\pi_0f(a_i)$ equals the average received molecules due to the current transmitted bit `1', when we are in state $S_i^{'}$. 
We are interested in minimizing Eq. (\ref{errorprob2}) over all $a_i\geq0$ subject to $$
\mathbb{E}[X]=\sum_{i=0}^{\infty}P_if(a_i)=\sum_{i=0}^{\infty}P_ia_{i+1}/\pi_1= K,$$ where $K$ is the power constraint defined in Eq. (\ref{errorprob}). Since the constraint is linear, the Karush-Kuhn-Tucker regularity conditions hold and can be written down as
\begin{align}
\begin{split}
&P_ie^{-a_i-\dfrac{\pi_0}{\pi_1}a_{i+1}}\dfrac{\left(a_i+\dfrac{\pi_0}{\pi_1}a_{i+1}\right)^T}{T!} \\
&+P_{i-1}\dfrac{\pi_0}{\pi_1}e^{-a_{i-1}-\dfrac{\pi_0}{\pi_1}a_{i}}\dfrac{\left(a_{i-1}+\dfrac{\pi_0}{\pi_1}a_i\right)^T}{T!} \\
&- P_{i}e^{-a_i}\dfrac{a_i^T}{T!}=-\mu_i+\lambda P_i \; \;\; \text{for}\; i= 1,2,... 
\label{eq:KKT}
\end{split}
\end{align}
Finding the $a_i$'s satisfying these equations is equivalent to finding the optimum transmission function, and as a result, the minimum error probability that can be achieved. A closed form solution for these equations does not exist. Instead of finding the exact value of the minimum error probability, we derive a lower bound. To do so, we consider only the dominant terms contributing to the error probability which are the terms containing $a_1$ and $a_2$. Let us define $\mathsf{P}(a_1,a_2)$ as below:\\
\begin{align}
\begin{split}
\mathsf{P}(a_1,a_2):=&\frac{1}{2}P_0 e^{-\pip a_1}\sum _{y=0}^{T}\frac{\left(\pip a_1\right)^y}{y!}+\\
&\frac{1}{2}P_1 e^{-a_1-\pip a_2}\sum _{y=0}^{T}\frac{\left(a_1+\pip a_2\right)^y}{y!}+ \\
&\frac{1}{2}P_1 e^{-a_1}\sum_{y=T+1}^{\infty}\frac{a_1^y}{y!}+\frac{1}{2}P_2e^{-a_2}\sum_{y=T+1}^{\infty}\frac{a_2^y}{y!}.
\label{error_reduced}
\end{split}
\end{align}
From Eq. (\ref{errorprob2}) and Eq. (\ref{error_reduced}) it is clear that (since $a_1,...,a_n\geq0$)
$\mathsf{P}(a_1,a_2)\leq\mathbb{P}_e$. It can be easily observed that 
\begin{align*}\min_{a_1,a_2\geq 0} \mathsf{P}(a_1,a_2)\leq\min_{a_j\geq0 \, ,\sum P_j a_{j+1}\leq \pi_1 k} \mathbb{P}_e.\end{align*}
{Minimizing $\mathsf{P}$ over non-negative values of $a_1$ and $a_2$ requires solving $\nabla \mathsf{P}(a_1,a_2)=0$, which results in equations of the form of Eq. (\ref{eq:KKT}) and a closed form solution for them does not exit. However, we can prove some properties for the solutions. Using these properties, we develope a simple tight lower bound on $\mathsf{P}(a_1,a_2)$ which is also a lower bound on the minimum error probability. In the rest of this section, we first present some properties for the solutions of $\nabla \mathsf{P}(a_1,a_2)=0$ in Theorem 1 and then the lower bound is demonstrated in Theorem 2.}
\begin{theorem}\label{th1} Given a fixed threshold $T$, let $\theta$ be the unique solution of the following equation:
\begin{align}\frac{\log\left(2\theta^{T+1}\right)}{\theta-1}=T.\label{eqn:def-theta}\end{align}
Then for all $\pip$ greater than $\theta$ and for all solutions of $\nabla \mathsf{P}(a_1,a_2)=0$, we have that $a_2$ is less than $T$. Moreover all of the solutions of $\nabla \mathsf{P}(a_1,a_2)=0$ in the interval $0< a_1, a_2<T$ satisfy $$\frac{\log\left(2(\pip)^{T+1}\right)}{\pip-1}<a_1.$$
Also, the function $\mathsf{P}(a_1,a_2)$ does not have any local minimum at the boundary points $a_1=0$ or $a_2=0$. 
\end{theorem}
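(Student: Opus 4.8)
The plan is to reduce the stationarity system $\nabla\mathsf{P}(a_1,a_2)=0$ to two scalar equations and then read off each of the three assertions from them. Writing $\phi(x):=e^{-x}x^{T}$ for brevity, the computation rests on the telescoping identity
$$\frac{d}{dx}\left[e^{-x}\sum_{y=0}^{T}\frac{x^{y}}{y!}\right]=-e^{-x}\frac{x^{T}}{T!}=-\frac{\phi(x)}{T!},$$
which I would apply term by term to $\mathsf{P}$ (the first two terms are lower tails, the last two are complementary tails). With $P_0=\tfrac12,\ P_1=\tfrac14,\ P_2=\tfrac18$ and after clearing the common factor $\tfrac{1}{2\,T!}$, the two equations $\partial_{a_2}\mathsf{P}=0$ and $\partial_{a_1}\mathsf{P}=0$ become
\begin{align*}
\phi(a_2)&=2\pip\,\phi(a_1+\pip a_2),\\
\phi(a_1)&=2\pip\,\phi(\pip a_1)+\phi(a_1+\pip a_2).
\end{align*}
It is also convenient to record the function $\psi(\pip):=\dfrac{\log\!\left(2(\pip)^{T+1}\right)}{\pip-1}$, so that $\theta$ is by definition the solution of $\psi(\pip)=T$. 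A short calculation shows the numerator of $\psi'(\pip)$ is negative at $\pip=1$ and strictly decreasing, whence $\psi$ is strictly decreasing on $(1,\infty)$ and maps it bijectively onto $(0,\infty)$; this yields both the uniqueness of $\theta$ and the equivalence $\pip>\theta\iff\psi(\pip)<T$.

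The bound on $a_1$ is then immediate. In the second stationarity equation the term $\phi(a_1+\pip a_2)$ is strictly positive on the interior region $0<a_1,a_2<T$, so $2\pip\,\phi(\pip a_1)<\phi(a_1)$. Substituting $\phi(x)=e^{-x}x^{T}$ and cancelling the factor $a_1^{T}>0$ gives $2(\pip)^{T+1}e^{-(\pip-1)a_1}<1$, i.e. $a_1>\psi(\pip)$, which is exactly the claimed inequality.

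The main obstacle is the bound $a_2<T$, which I would prove by contradiction. The case $a_2=0$ is trivial, so suppose $a_2\ge T$ and set $u:=a_1+\pip a_2\ge\pip a_2>T$ (using $\pip>1$). The first stationarity equation reads $\phi(a_2)/\phi(u)=2\pip$. I would first observe that, as a function of $u$, the ratio $\phi(a_2)/\phi(u)$ has logarithm $\mathrm{const}+u-T\log u$, whose derivative $1-T/u$ is positive for $u>T$; hence the ratio is increasing on $[\pip a_2,\infty)$ and is minimized at the left endpoint $u=\pip a_2$, where it equals $e^{(\pip-1)a_2}(\pip)^{-T}$. This last expression is increasing in $a_2$ and equals $2\pip$ precisely at $a_2=\psi(\pip)$; since $\pip>\theta$ forces $\psi(\pip)<T\le a_2$, we obtain $\phi(a_2)/\phi(u)\ge e^{(\pip-1)a_2}(\pip)^{-T}>2\pip$, contradicting the equation. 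Therefore $a_2<T$.

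Finally, for the boundary statement I take $T\ge1$ so that $\phi(0)=0$ and evaluate the partials on the axes. On $a_1=0$ the $a_1$-derivative reduces to $\partial_{a_1}\mathsf{P}\big|_{a_1=0}=-\tfrac{1}{8\,T!}\,\phi(\pip a_2)<0$ for $a_2>0$, and on $a_2=0$ the $a_2$-derivative reduces to $\partial_{a_2}\mathsf{P}\big|_{a_2=0}=-\tfrac{\pip}{8\,T!}\,\phi(a_1)<0$ for $a_1>0$; in each case moving into the interior strictly decreases $\mathsf{P}$, so no boundary point with a positive coordinate can be a local minimum. The remaining corner is handled directly: along $a_2=0$ one computes $\mathsf{P}(a_1,0)=\tfrac12 P_0\,e^{-\pip a_1}\sum_{y=0}^{T}\frac{(\pip a_1)^{y}}{y!}+\tfrac12 P_1$, which is strictly decreasing in $a_1$, ruling out a local minimum at $(0,0)$ as well. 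I expect the monotonicity analysis of $\phi(a_2)/\phi(u)$ combined with the characterization of $\theta$ through $\psi$ to be the delicate step, while the $a_1$-bound and the boundary analysis are routine once the two stationarity equations are in hand.
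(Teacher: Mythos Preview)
Your proposal is correct and follows essentially the same route as the paper: you derive the same two stationarity equations (the paper's Eqs.~\eqref{eq:1}--\eqref{eq:2}), obtain the $a_1$ bound by dropping the positive $\phi(a_1+\pip a_2)$ term, prove $a_2<T$ by contradiction via the monotonicity of $\phi$ on $[T,\infty)$ combined with $\psi(\pip)<T$ for $\pip>\theta$, and handle the boundary by the same sign computation of the partials. Your framing with $\phi,\psi$ and the explicit verification that $\psi$ is strictly decreasing (hence $\theta$ is unique) are tidier than the paper's presentation, and your treatment of the corner $(0,0)$ fills a small gap the paper leaves implicit, but the underlying argument is the same.
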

The proof is given in Appendix \ref{th1proof}. Using the bound on $a_1$ given in Theorem 1, we can find a lower bound on the minimum error probability, which is given in Theorem 2.
\begin{theorem}\label{th2} { Given a threshold greater than four molecules, i.e. $T\geq 4$, we have} $$\mathsf{P}(a_1,a_2)>\mathbb{L}\left(\frac{\log\left(2(\pip)^{T+1}\right)}{\pip-1},\frac{T}{\pip+1}\right),$$ for any $a_1,a_2>0$ and any $\pip>\theta$, where $\theta$ was given in Eq. \eqref{eqn:def-theta}. The function $\mathbb{L}$ is defined as follows:
\begin{equation}
\begin{split}
\mathbb{L}(a_1,a_2):=&e^{-a_1}\frac{a_1^T}{T!}\left(\frac{1}{8}\frac{a_1}{T+1}+\frac{1}{8 \pip}\right)+\\
&e^{-a_2}\frac{a_2^T}{T!}\left(\frac{1}{16}\frac{a_2}{T+1}+\frac{1}{16 \pip}-\frac{1}{16(\pip)^2}\right).
\label{lower_bound}
\end{split}
\end{equation}
\end{theorem}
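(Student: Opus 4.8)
The plan is to regard $\mathbb{L}\big(\tfrac{\log(2(\pip)^{T+1})}{\pip-1},\tfrac{T}{\pip+1}\big)$ as a single fixed number and to prove it is a lower bound for the global minimum of $\mathsf{P}$ over the open quadrant $a_1,a_2>0$. First I would argue the infimum is attained at an interior stationary point. As $a_1\to\infty$ the third term of \eqref{error_reduced}, namely $\tfrac18\,\mathbb{P}(\mathrm{Poisson}(a_1)>T)$, tends to $\tfrac18$, and as $a_2\to\infty$ the fourth term tends to $\tfrac1{16}$; hence $\mathsf{P}$ stays bounded away from $0$ along the boundary at infinity and, being continuous and nonnegative, attains its infimum at a finite point. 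Combining this with the statement of Theorem \ref{th1} that $\mathsf{P}$ has no local minimum on the axes $a_1=0$ or $a_2=0$, the minimizer $(\hat a_1,\hat a_2)$ must solve $\nabla\mathsf{P}=0$. Theorem \ref{th1} then pins it down: $\hat a_2<T$ always, and, provided $\hat a_1<T$, also $\hat a_1>a_1^\ast:=\tfrac{\log(2(\pip)^{T+1})}{\pip-1}$.

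Next I would note that the hypothesis $\pip>\theta$ places the evaluation point inside the box $[0,T]^2$. Indeed $\theta$ is defined in \eqref{eqn:def-theta} exactly by $a_1^\ast(\theta)=T$, and a short computation shows $a_1^\ast(\pip)$ is strictly decreasing in $\pip$, so $\pip>\theta$ gives $a_1^\ast<T$; also $a_2^\ast:=\tfrac{T}{\pip+1}<\tfrac T2<T$. Consequently both arguments of $\mathbb{L}$ are sub-mode, so $\mathbb{L}(a_1^\ast,a_2^\ast)$ is of the order of single Poisson point masses $e^{-a}a^{T}/T!$ with $a<T$, which is tiny. This immediately disposes of the degenerate case $\hat a_1\ge T$: there the third term alone gives $\mathsf{P}\ge\tfrac18\,\mathbb{P}(\mathrm{Poisson}(T)>T)$, a positive constant dwarfing $\mathbb{L}(a_1^\ast,a_2^\ast)$. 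Everything thus reduces to the box $a_1^\ast<\hat a_1<T,\ 0<\hat a_2<T$.

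On this box I would lower-bound $\mathsf{P}(\hat a_1,\hat a_2)$ using two elementary Poisson estimates. For the two upper-tail terms I keep only the leading summand, $\mathbb{P}(\mathrm{Poisson}(\mu)>T)\ge\tfrac{\mu}{T+1}e^{-\mu}\tfrac{\mu^{T}}{T!}$, which reproduces the $\tfrac{a_1}{T+1}$- and $\tfrac{a_2}{T+1}$-pieces of \eqref{lower_bound} with the weights $\tfrac18$ and $\tfrac1{16}$; for the two lower-tail (CDF) terms I keep the single summand at $y=T$. The crucial calibration is the defining relation of $a_1^\ast$, namely $2(\pip)^{T+1}e^{-(\pip-1)a_1^\ast}=1$, which says precisely that $e^{-\pip a_1^\ast}\tfrac{(\pip a_1^\ast)^{T}}{T!}=\tfrac{1}{2\pip}e^{-a_1^\ast}\tfrac{(a_1^\ast)^{T}}{T!}$; this converts the first term of \eqref{error_reduced} into the $\tfrac{1}{8\pip}$-piece, and an analogous manipulation on the cross term yields the $\tfrac{1}{16\pip}-\tfrac{1}{16(\pip)^2}$-piece. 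Finally, using that $a\mapsto e^{-a}a^{T}/T!$ and $a\mapsto\tfrac{a}{T+1}e^{-a}a^{T}/T!$ are increasing on $[0,T]$, I would group the first and third terms (both functions of $a_1$ only) and minimize over $a_1\in[a_1^\ast,T]$, obtaining $g_1(a_1^\ast)$, and group the second and fourth terms and minimize over $a_2$, where the balance of the decreasing cross term against the increasing tail term is expected to locate the minimizer at $a_2^\ast=\tfrac{T}{\pip+1}$.

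The hard part will be the cross term $\tfrac18 e^{-a_1-\pip a_2}\sum_{y\le T}(a_1+\pip a_2)^y/y!$, since it couples $a_1$ and $a_2$ and, being decreasing in both, cannot be bounded below merely by invoking $\hat a_1>a_1^\ast$. All the $1/\pip$-order contributions originate from such CDF terms, whose monotonicity in their argument is opposite to that of the tail terms, so no single term of \eqref{error_reduced} dominates its matching piece of $\mathbb{L}$ throughout $a_1>a_1^\ast$; in particular one checks that the fourth term alone cannot supply the full $a_2$-contribution, forcing the use of the cross term. The resolution I anticipate is to bound the two \emph{grouped} expressions rather than the individual terms, and to verify that each group, as a function of its free variable, attains its minimum over the feasible range exactly at $a_1^\ast$ (respectively $a_2^\ast$). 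Carrying this out—while ensuring the discarded truncation remainders and the subtracted $\tfrac{1}{16(\pip)^2}$ remain nonnegative, which is where the assumption $T\ge4$ enters—is where the bulk of the technical work lies.
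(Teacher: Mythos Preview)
Your overall case split (interior stationary point, $\hat a_1\ge T$, boundary at infinity) matches the paper's. The gap is in the main case $0<\hat a_1,\hat a_2<T$, where you try to produce the $\tfrac1{8\pip}$ and $\tfrac{1}{16\pip}-\tfrac1{16(\pip)^2}$ pieces of $\mathbb{L}$ by evaluating the first two CDF-type terms of \eqref{error_reduced} at the \emph{target point} $(a_1^\ast,a_2^\ast)$ via the defining relation of $a_1^\ast$, and then hope that a grouped minimization will land there. This is not how those pieces arise, and the grouping strategy you sketch is unlikely to close: the truncated first term $\tfrac14 e^{-\pip a_1}(\pip a_1)^T/T!$ is \emph{decreasing} on $[a_1^\ast,T]$ (since $\pip a_1^\ast>T$), while the truncated third term is increasing, so there is no reason the sum attains its minimum at $a_1^\ast$; and the cross term still depends on both variables, so your ``analogous manipulation'' for the $a_2$-pieces has no obvious meaning.

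The idea you are missing is that the $1/\pip$-contributions come from substituting the \emph{stationary-point equations} themselves. At any solution of $\nabla\mathsf P=0$ one has, from \eqref{eq:1}--\eqref{eq:2},
\[
e^{-a_1-\pip a_2}\tfrac{(a_1+\pip a_2)^T}{T!}=\tfrac{1}{2\pip}\,e^{-a_2}\tfrac{a_2^T}{T!},\qquad
e^{-\pip a_1}\tfrac{(\pip a_1)^T}{T!}=\tfrac{1}{2\pip}\Big(e^{-a_1}\tfrac{a_1^T}{T!}-\tfrac{1}{2\pip}e^{-a_2}\tfrac{a_2^T}{T!}\Big).
\]
Keeping only $y=T$ in the first two sums of \eqref{error_reduced} and $y=T{+}1$ in the last two, and then making these substitutions, gives \emph{exactly} $\mathbb{L}(\hat a_1,\hat a_2)$ at the minimizer---with $a_1$ and $a_2$ completely decoupled and each factor increasing on $[0,T]$. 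The bound then follows from $\hat a_1>a_1^\ast$ (Theorem~\ref{th1}) together with a separate lower bound $\hat a_2>T/(\pip+1)$, which the paper derives from \eqref{eq:2} using that $g(x)=e^{-x}x^T/T!$ satisfies $g(T+\alpha)\ge g(T-\alpha)$ on $[0,T]$; you never establish this inequality for $\hat a_2$, only assert that $T/(\pip+1)$ is where some group ``should'' minimize. Once you use the gradient equations this way, the hard coupled cross term disappears entirely and no grouping argument is needed; the role of $T\ge 4$ is confined to the quantitative comparison in the $\hat a_1>T$ and $a_2\to\infty$ cases, not to any sign condition inside $\mathbb{L}$.
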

The proof is given in Appendix \ref{th2proof}. Numerical results show that the lower bound given in Eq. (\ref{lower_bound}) is quite tight and the error probability of our simple proposed transmitter is very close to the bound for typical values of channel parameters. 
\subsection{Numerical Results}
In this section, we evaluate the performance of our proposed scheme. For all results presented in this section, the molecule hitting probabilities are calculated as in \cite{MO}.
In Fig. \ref{comp}, the performance of the proposed transmitter is depicted versus the symbol rate for different number of transmitter bit memory. For the comparisons, the error probability plots for the transmitter with infinite memory (known interference), the conventional transmitter with simple threshold decoder and with the decoders introduced in \cite{MO} are included. As it can be seen, with only two bits of memory in the transmitter, we can reach the performance of an unlimited memory system, which is aware of the entire transmission history. From this figure, if the ISI is totally neglected, the error probability is high and the system is unreliable. The system with memory at the receiver \cite{MO}, outperforms our system, but the current proposed system is superior in the sense of complexity.
\begin{figure}
\centering
\includegraphics[width=3.7 in]{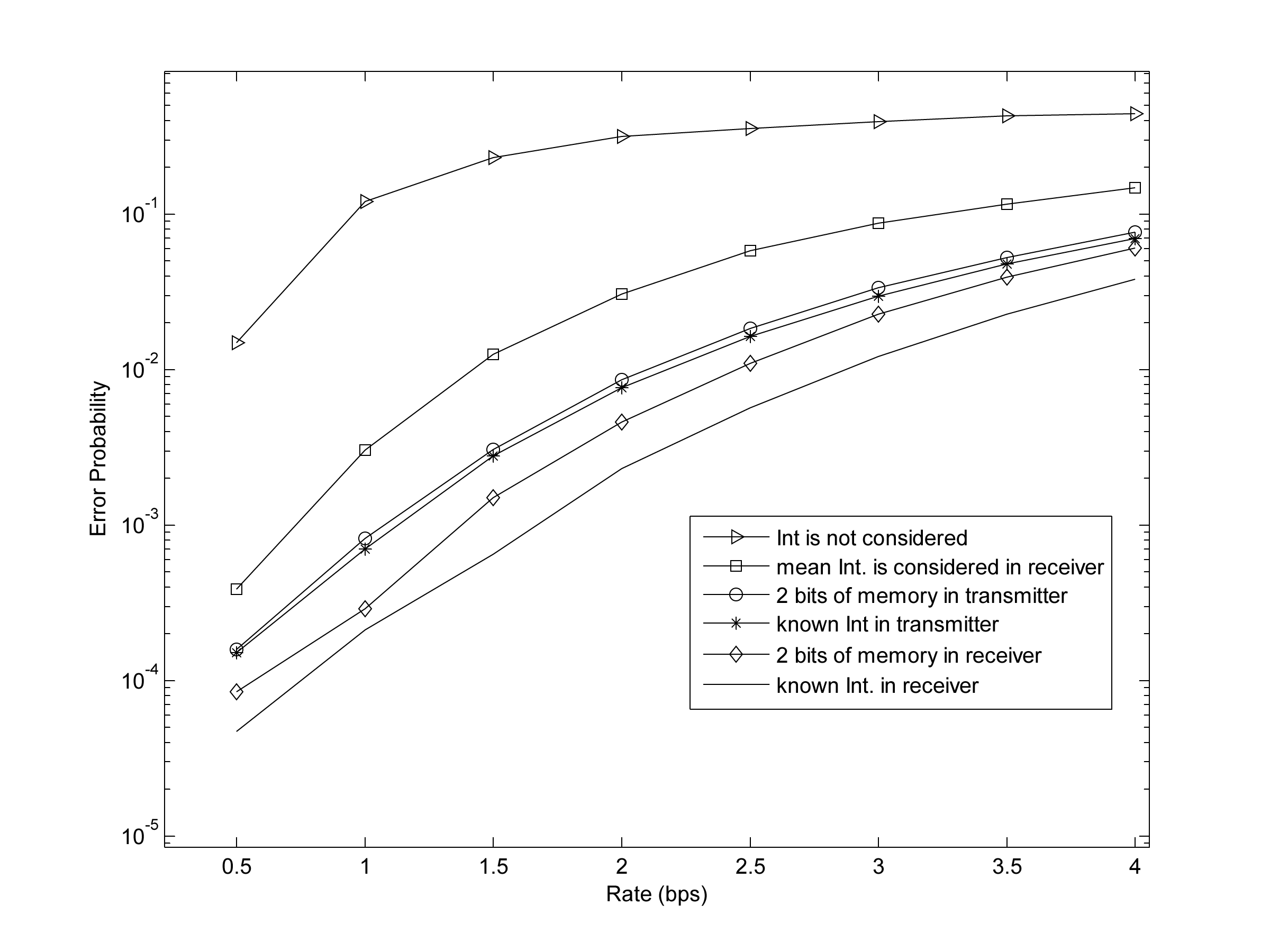}
\caption{ Proposed System Performance. Parameters for this figure: average transmitted molecules per bit = 80, noise rate $\lambda_0=10$, channel memory = 10.}
\label{comp}
\end{figure}
Fig. \ref{lb} shows the derived lower bound versus the transmission symbol rate. For comparison, the error probability of the proposed system and the minimum error probability obtained by minimizing Eq. (\ref{errorprob2}) numerically are included as well. As can be seen, the proposed system is near optimal and the lower bound is tight.
\begin{figure}
\centering
\includegraphics[width=3.7in]{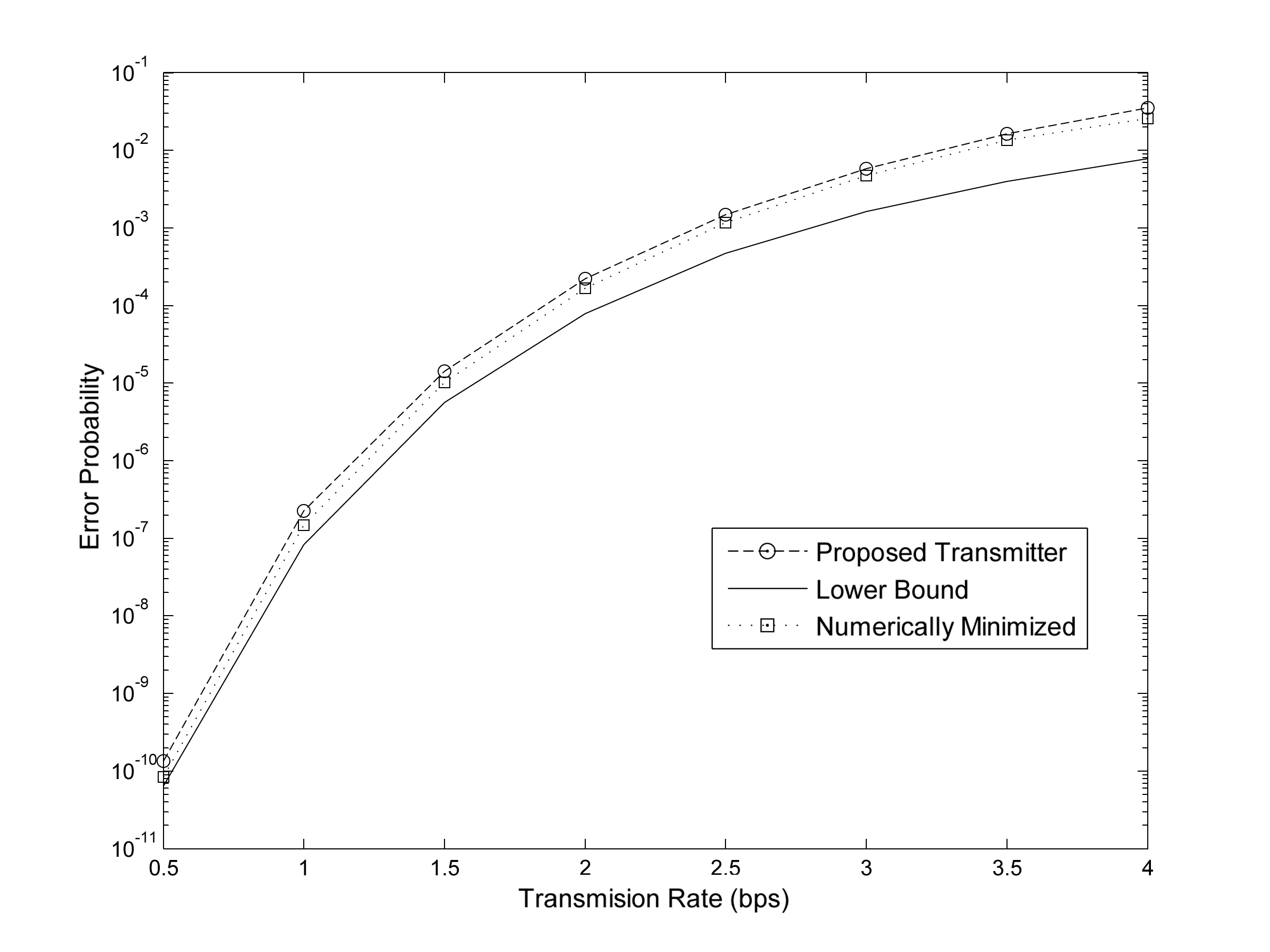}
\caption{Comparing simple transmitter performance with lower bound and minimum error probability. Parameters for this figure: average transmitted molecules per bit = 80, noise rate $\lambda_0$=0, channel memory = 1.}
\label{lb}
\end{figure}
In Fig. \ref{dist-error}, the performance of proposed transmitter is compared with the conventional OOK transmitter for different distances between transmitter and receiver. From this Fig. as the distance increases, the channel memory also increases, and as a result the performance degrades. 
\begin{figure}
\centering
\includegraphics[width=3.7in]{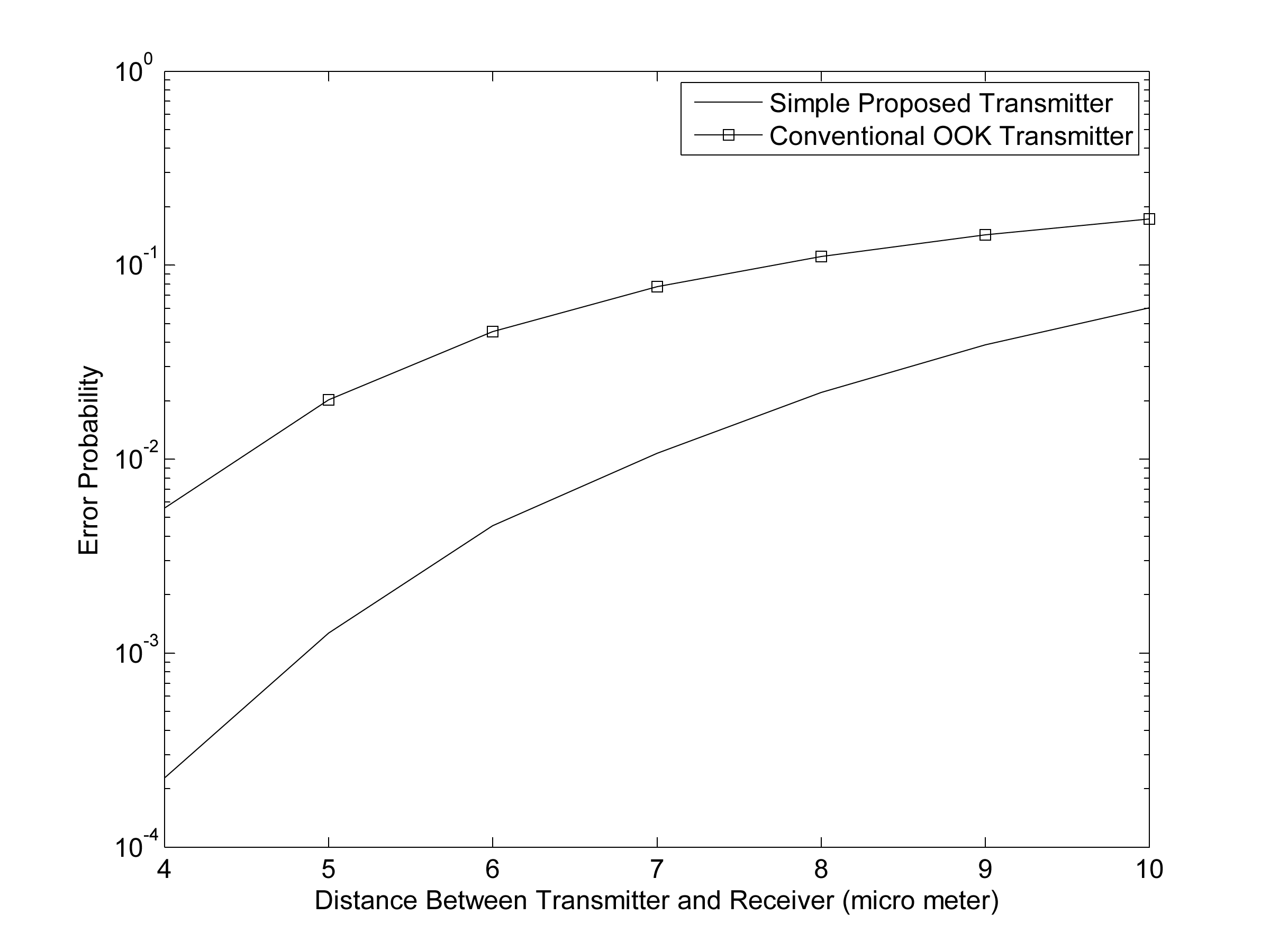}
\caption{Comparing simple transmitter performance with conventional OOK. Parameters for this figure: average transmitted molecules per bit = 80, noise rate $\lambda_0=0$, channel memory = 10.}
\label{dist-error}
\end{figure}
\section{Conclution}
\label{sec:conclu}
In this paper, we focused on the ISI problem in diffusion based molecular communication. We proposed a simple system consisting of a transmitter with varying molecular transmission rate and a simple threshold decoder. We observed that using a limited memory system we can reach a performance close to that of unlimited memory. Also we showed that under certain simplifications our transmitter is near optimal.

\appendices
\section{Proof of Theorem \ref{th1} \label{th1proof}}
Considering Eq. \eqref{error_reduced}, $\nabla \mathsf{P}(a_1,a_2)=0$ leads to:
\begin{align}
\begin{split}
2\pip e^{-\pip a_1}&\frac{\left(\pip a_1\right)^T}{T!}+
e^{-a_1-\pip a_2}\frac{\left(a_1+\pip a_2\right)^T}{T!}=e^{-a_1}\frac{a_1^T}{T!}
\label{eq:1}
\end{split}
\vspace{-0.5cm}
\end{align}
\begin{align}
\begin{split}
&2\pip e^{-a_1-\pip a_2}\frac{\left(a_1+\pip a_2\right)^T}{T!}=e^{-a_2}\frac{a_2^T}{T!}.
\label{eq:2}
\end{split}
\end{align}
\emph{Proof of the inequality on $a_1$ when $a_1<T$:} Assume that $\nabla \mathsf{P}(a_1,a_2)=0$ and $a_1<T$. From Eq. \eqref{eq:1} we have that
\begin{align}2\pip e^{-\pip a_1}\frac{\left(\pip a_1\right)^T}{T!}< e^{-a_1}\frac{a_1^T}{T!}.\label{eqn:newwqau1}\end{align}
\iffalse
Then since $\pip>1$, we have
$$e^{-\pip a_1}\frac{\left(\pip a_1\right)^T}{T!}< e^{-a_1}\frac{a_1^T}{T!}.$$
We claim that $\pip a_1>T$. This holds because if $\pip a_1\leq T$, we can use the fact that $e^{-x}x^T/T!$ is an increasing function for $x\leq T$ to get that $\pip a_1<a_1$, which is a contradiction since $\pip>1$.
\fi
The above equation can be simplified as follows:
\begin{align}2\pip e^{(1-\pip) a_1}{\left(\pip \right)^T}< 1.\label{eqn:newwqau2}\end{align}
which implies that $$\frac{\log\left(2(\pip)^{T+1}\right)}{\pip-1}<a_1.$$ 
\emph{Proof of the inequality $a_2<T$:} Assume that $\nabla \mathsf{P}(a_1,a_2)=0$. We would like to prove that $a_2<T$.
We prove this by contradiction. If $a_2\geq T$, then $a_1+\pip a_2\geq T$. Since $e^{-x}\frac{x^T}{T!}$ is a decreasing function for $x\geq T$, from Eq. \eqref{eq:2}, we have that
$$2\pip e^{-\pip a_2}\frac{\left(\pip a_2\right)^T}{T!}\geq 2\pip e^{-a_1-\pip a_2}\frac{\left(a_1+\pip a_2\right)^T}{T!}=e^{-a_2}\frac{a_2^T}{T!}.
$$
Hence,
$$2{\left(\pip \right)^{T+1}}\geq e^{a_2(\pip-1)}.$$
Thus,
$$\frac{\log\left(2(\pip)^{T+1}\right)}{\pip-1}\geq a_2.$$ 
We will arrive at a contradiction if $$\frac{\log\left(2(\pip)^{T+1}\right)}{\pip-1}<T.$$
The left hand side is a decreasing function in terms of $\pip$. Furthermore, at $\pip=\theta$ given in Eq. \eqref{eqn:def-theta} we have equality. This proves both the uniqueness of the solution to Eq. \eqref{eqn:def-theta} and the desired result for $\pip>\theta$.
\emph{Proof for the boundary cases:} To show that a minimum cannot occur at the boundary, i.e., $a_2=0$ or $a_1=0$, let us consider the gradient. For $a_2=0$
\begin{align}\frac{\partial \mathsf{P}}{\partial a_2}=-\frac{P_1}{2}\pip e^{-a_1}\frac{(a_1)^T}{T!}<0,\end{align}
and for $a_1=0$
\begin{align}\frac{\partial \mathsf{P}}{\partial a_1}=-\frac{P_1}{2} e^{-\pip a_2}\frac{(\pip a_2)^T}{T!}<0.\end{align}

\section{Proof of Theorem \ref{th2} \label{th2proof}}
\begin{proof}
Theorem 1 shows that if $a_2$ is an extremum point for $\mathsf{P}(a_1,a_2)$, it must be in $[0,T]$ for $\pip>\theta$. Furthermore, at $a_2=0$ the minimum does not occur. Therefore, we need to consider the cases of finite $a_2\leq T$ and the boundary case of $a_2\rightarrow\infty$. Here we consider three cases:
\begin{enumerate}
\item[\emph{Case 1:}] The function $\mathsf{P}(a_1,a_2)$ has a global minimizer with $a_1\in [0,T]$ and $a_2\leq T$; at this minimizer $\mathbb{L}\left(\frac{\log(2(\pip)^{T+1})}{\pip-1},\frac{T}{\pip+1}\right)$ is a lower bound on minimum error probability;
\item[\emph{Case 2:}] The function $\mathsf{P}(a_1,a_2)$ has a global minimizer with $a_1>T$ and $a_2\leq T$; then $\mathsf{P}(a_1,a_2)$ is again bounded from below by the above lower bound, assuming that $T>4$;
\item[\emph{Case 3:}] The minimum of $\mathsf{P}(a_1,a_2)$ occurs when $a_2$ converges to infinity. Here again, the lower bound holds, assuming that $T>2$.
\end{enumerate}
\vspace{0.5cm}
\emph{Proof of case 1:} Each summation on $y$ in $\mathsf{P}$, in Eq. (\ref{error_reduced}), is over positive terms. So one can consider only one of the terms in summation instead of all the terms to get a lower bound. Thus, by considering the terms $y=T$ in the first two summations and $y=T+1$ in the third and fourth summations, we have
\begin{align}
&\frac{1}{2}P_0 e^{-\pip a_1}\frac{(\pip a_1)^T}{T!}+\frac{1}{2}P_1 e^{-a_1-\pip a_2}\frac{(a_1+\pip a_2)^T}{T!}+\frac{1}{2}P_1 e^{-a_1}\frac{a_1^{T+1}}{(T+1)!}+\frac{1}{2}P_2 e^{-a_2}\frac{a_2^{T+1}}{(T+1)!}\leq \mathsf{P}(a_1,a_2).
\label{appc-1}
\end{align}
Using Eq. (\ref{eq:1}), Eq. (\ref{eq:2}) and substituting the values of $P_i=\big(\frac{1}{2}\big)^{i+1}$ in Eq. \eqref{appc-1} we get:
\begin{align}
\begin{split}
&e^{-a_1}\frac{a_1^T}{T!}\left(\frac{1}{8}\frac{a_1}{T+1}+\frac{1}{8 \pip}\right)+\\
&e^{-a_2}\frac{a_2^T}{T!}\left(\frac{1}{16}\frac{a_2}{T+1}+\frac{1}{16 \pip}-\frac{1}{16(\pip)^2}\right)\leq \mathsf{P}(a_1,a_2).
\label{2-1}
\end{split}
\end{align}
For $(a_1,a_2)\in [0,T]^2$, the LHS of Eq. (\ref{2-1}) is the sum of two increasing functions of $a_1$ and $a_2$ respectively. So if we find lower bounds on $a_1$ and $a_2$, and substitute in Eq. (\ref{2-1}), we get a lower bound on minimum error probability. To that end, from Theorem \ref{th1}, we know that $a_1$ is bounded from below by $\log(2(\pip)^{T+1}/(\pip-1)$. From Eq. (\ref{eq:2}) a lower bound on $a_2$ can be obtained as follows: using the fact that $2\pip>1$ in Eq. (\ref{eq:2}) implies that
\begin{align}e^{-a_1-\pip a_2}\frac{(a_1+\pip a_2)^T}{T!}< e^{-a_2}\frac{a_2^T}{T!}.\end{align}
Due to the ``increasing/decreasing property" of $e^{-x}x^T/T!$ for $x<T$ and $x>T$, we get that $a_1+\pip a_2>T$ as $a_2\leq T$. Furthermore, since $g(x)=e^{-x}x^T/T!$ has the property that $g(T+\alpha)\geq g(T-\alpha)$ for all $\alpha\in[0,T]$ we get that $a_1+\pip a_2-T>T-a_2$, or 
\begin{align}a_1+(\pip+1) a_2>2T,\end{align}
Using the fact that $a_1\leq T$, we get the following bound on $a_2$:
\begin{align}T\geq a_2>\frac{2T-a_1}{\pip+1}\geq\frac{T}{\pip+1}.\end{align}
Let us define
\begin{align}
\begin{split}
\mathbb{L}(x,y):=&e^{-x}\frac{x^T}{T!}\left(\frac{1}{8}\frac{x}{T+1}+\frac{1}{8 \pip}\right)+\\
&e^{-y}\frac{y^T}{T!}\left(\frac{1}{16}\frac{y}{T+1}+\frac{1}{16 \pip}-\frac{1}{16(\pip)^2}\right).
\label{LB_func}
\end{split}
\end{align}
Thus, From Eq. (\ref{2-1}) and Eq. (\ref{LB_func}), by substituting the lower bounds on $a_1$ and $a_2$ in Eq. (\ref{2-1}), the LHS becomes equal to $\mathbb{L}$ and hence \begin{align} \mathbb{L}\left(\frac{\log(2(\pip)^{T+1})}{\pip-1}, \frac{T}{\pip+1}\right) \leq \mathsf{P}(a_1,a_2), \label{th2_lb}\end{align} is a lower bound on $\mathbb{P}(a_1,...,a_n)$.

\emph{Proof of case 2:} In order to show that the lower bound obtained in the first case still holds if the minimizer value for $a_1$ is greater than $T$, we show that for all values of $a_1>T$ and $T>4$, $\mathsf{P}(a_1,a_2)$ is greater than the lower bound in Eq. \eqref{th2_lb}. To that end, we derive a new lower bound on $\mathsf{P}(a_1,a_2)$, when $a_1$ is greater than $T$, i.e., we find some
$\tilde{P}(T)\leq \mathsf{P}(a_1,a_2)$. Note that the lower bound depends only on $T$. We also find an upper bound on $\mathbb{L}$ in terms of $T$, i.e. $$\widetilde{\mathbb{L}}\left(T\right)\geq \mathbb{L}\left(\frac{\log(2(\pip)^{T+1})}{\pip-1}, \frac{T}{\pip+1}\right).$$
The expression $\tilde{P}(T)$ is an increasing function of $T$, whereas $\widetilde{\mathbb{L}}$ is a decreasing function of $T$. Furthermore, $\tilde{P}(T)=\widetilde{\mathbb{L}}$ at $T^{\ast}=4.87$. Thus, we can conclude that for $T>T^{\ast}$ and for all values of $a_1>T$, $\mathsf{P}(a_1,a_2)$ is greater than lower bound in Eq. \eqref{th2_lb}. 
To get a lower bound on $\mathsf{P}(a_1,a_2)$ in Eq. \eqref{error_reduced}, let us consider the third term and neglect the others. Among all values of $a_1>T$, the third term is minimized at $a_1=T$, so we can write:
\begin{equation}
\tilde{P}(T):=\frac{1}{2}P_1 e^{-T}\sum_{y=T+1}^{\infty}\frac{T^y}{y!}< \min_{a_2>0,a_1>T}{\mathsf{P}(a_1,a_2)}.
\label{lb}
\end{equation}
Now we calculate an upper bound on $\mathbb{L}(A,B)$ for $A=\frac{\log\left(2(\pip)^{T+1}\right)}{\pip-1}$ and $B=\frac{T}{\pip+1}$. The following chain of inequalities hold:
\begin{equation}
\begin{split}
\mathbb{L}(A,B)&=e^{-A}\frac{A^T}{T!}\left[\frac{1}{8}\frac{A}{T+1}+\frac{1}{8 \pip}\right]+e^{-B}\frac{B^T}{T!}\left[\frac{1}{16}\frac{B}{T+1}+\frac{1}{16 \pip}-\frac{1}{16(\pip)^2}\right] \\
&\overset{(i)}{<} e^{-T}\dfrac{(T)^T}{T!}\Big[\dfrac{1}{8}\dfrac{A}{T+1}+\dfrac{1}{8\pip}\Big]+e^{-T/2}\dfrac{(T/2)^T}{T!} \Big[\dfrac{1}{16}\dfrac{B}{T+1}+\dfrac{1}{16}\Big(\dfrac{1}{\pip}-\dfrac{1}{\pip^2}\Big)\Big]\\
&\overset{(ii)}{<}e^{-T}\dfrac{(T)^T}{T!}\dfrac{1}{4}+e^{-T/2}\dfrac{(T/2)^T}{T!} \Big[\dfrac{1}{32}+\dfrac{1}{16}\times\dfrac{1}{4}\Big]\\
&= \dfrac{1}{4} e^{-T}\dfrac{(T)^T}{T!}+\dfrac{3}{64}e^{-T/2}\dfrac{(T/2)^T}{T!}
\\:&=\widetilde{\mathbb{L}}\left(T\right), 
\label{ub}
\end{split}
\end{equation}
where inequality ($i$) holds because $e^{-x}x^T/T!$ is an increasing function of $x$ for $x<T$, and the fact that $A<T$ and $B<T/2$; inequality ($ii$) results from the fact that $\Big(1/\pip-1/(\pip)^2\Big)<1/4$, $A<T$ and $B<T/2$.
The lower bound in Eq. \eqref{lb} is an increasing function of $T$ over integer values and the upper bound in Eq. \eqref{ub} is a decreasing one, so it can be concluded that the upper bound is less than the lower bound, for $T>T^\ast \cong 4.87$, where $T^{\ast}$ is the solution for Eq. \eqref{Tstar}.
\begin{equation}
\frac{1}{2}P_1 e^{-T}\sum_{y=T+1}^{\infty}\frac{T^y}{y!}=\dfrac{1}{4} e^{-T}\dfrac{(T)^T}{T!}+\dfrac{3}{64}e^{-T/2}\dfrac{(T/2)^T}{T!}.
\label{Tstar}
\end{equation}
Thus, we have shown that for $T>4$ and for values of $a_1>T$, $\mathsf{P}(a_1,a_2)$ is greater than $\mathbb{L}(A,B)$.

\emph{Proof of case 3:} We show that as $a_2$ converges to infinity, the minimum value for $\mathsf{P}(a_1,a_2)$ in Eq. \eqref{error_reduced} is greater than the lower bound in Eq. \eqref{th2_lb}. Eq. \eqref{error_reduced} is sum of four poisson CDFs. At $a_2 \rightarrow \infty $ the second term equals zero and the last term equals $\frac{1}{2}P_2 = \frac{1}{16}$; the two other terms are functions of $a_1$ and the sum is minimized at $$a_1^\ast=\frac{\log\left(2(\pip)^{T+1}\right)}{\pip-1}.$$ So the minimum value for $\mathsf{P}(a_1,a_2)$ at $a_2 \rightarrow \infty$ equals:
\begin{align}\frac{1}{4}CDF_{Poisson}(T,a_1^\ast)+\frac{1}{8}(1-CDF_{Poisson}(T,a_1^\ast))+\frac{1}{16},\label{case3}\end{align}
where $CDF_{Poisson}(T,a_1^\ast)=\sum_{y=0}^{T}{e^{-a_1^\ast}\frac{{a_1}^y}{y!}}$. From Eq. \eqref{case3} we see that the minimum value for $\mathsf{P}(a_1,a_2)$ is always greater than $\frac{1}{16}$ at  $a_2 \rightarrow \infty$ and from Eq. \eqref{ub} we find out that for $T>2$ the value of $\widetilde{\mathbb{L}}$ is less than $\frac{1}{16}$, so for all values of $T>2$ the lower bound, $\mathbb{L}(A,B)$, still holds even if the minimizer value for $a_2$ converges to $\infty$ and it completes the proof. 
\end{proof}
\section{Does Sending a Constant Value for Bit `0' Improves the Performance? \label{remark}} 
\begin{theorem}
Among all transmitters which send a constant rate for bit `0', the optimal one sends nothing for bit `0'
\end{theorem}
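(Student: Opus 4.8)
The plan is to show that any transmitter sending a positive constant $c>0$ for bit `0' is dominated by one sending nothing for bit `0', by exhibiting an explicit \emph{power-preserving} transformation of the former into the latter that can only lower both conditional error probabilities. Throughout I work in the one-memory regime of this section ($\pi_0>\pi_1\geq\pi_2=\cdots=0$), where the theorem is invoked.

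First I would set up the transformation. Let the given transmitter send rate $c$ for every `0' and, in state $S_i'$, rate $g_i$ for `1' (so its interference-dependent rule is encoded by the sequence $g_i$). Define a new transmitter that sends $0$ for `0' and rate $g_i+c$ for `1' in state $S_i'$, keeping the same threshold $T$. Crucially, the state process of Fig.~\ref{tsd} depends only on the bit history, not on the assigned rates, so the two transmitters traverse identical states with identical probabilities $P_i=(1/2)^{i+1}$; the comparison can therefore be made slot by slot. I would then check that the two schemes consume exactly the same average power: since each bit is $\mathrm{Bernoulli}(1/2)$, the old scheme uses $\tfrac12\big(c+\mathbb{E}[g]\big)$ and the new one uses $\tfrac12\big(0+\mathbb{E}[g+c]\big)$, which coincide, so the new scheme still satisfies $\mathbb{E}[X]\leq K$.

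Second comes the heart of the argument: a case analysis of the per-slot received rate $\pi_0 X_i+\pi_1 X_{i-1}+\lambda_0$, comparing new minus old. Writing $X_{i-k}^{\text{new}}-X_{i-k}^{\text{old}}=+c$ when $b_{i-k}=1$ and $=-c$ when $b_{i-k}=0$, the difference of received rates in a `1' slot is at least $(\pi_0-\pi_1)c\geq 0$ and in a `0' slot is at most $-(\pi_0-\pi_1)c\leq 0$; this is exactly where the standing assumption $\pi_0>\pi_1$ is used. Thus in every slot the new scheme raises the received Poisson mean on `1' inputs and lowers it on `0' inputs. Invoking the monotonicity facts already exploited in the paper, namely that $\mathbb{P}(\mathrm{Poisson}(\mu)>T)$ is increasing and $\mathbb{P}(\mathrm{Poisson}(\mu)\leq T)$ decreasing in $\mu$, with the common fixed threshold $T$, both $P_{e|0}$ and $P_{e|1}$ can only decrease. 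Hence $\mathbb{P}_e^{\text{new}}\leq\mathbb{P}_e^{\text{old}}$, with strict inequality when $c>0$, so $c=0$ is optimal.

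The main obstacle, and the reason a naive argument fails, is that the two natural moves pull in opposite directions: simply shifting all rates down by $c$ (to make `0' send nothing) lowers $P_{e|0}$ but raises $P_{e|1}$, so it is not obviously an improvement, and matching Poisson parameters exactly would clash with the integrality of the threshold $T$. The resolution is to reinvest the power freed from the `0' slots into the `1' slots, equivalently, to \emph{move} $c$ from `0' to `1' rather than to delete it, so that the received mean moves favorably for both hypotheses at once while the threshold stays fixed and the power budget is untouched. Verifying that this reallocation keeps the `1'-slot shift nonnegative is precisely the step that consumes $\pi_0>\pi_1$ (and in a general $M_c$ channel would instead require $\pi_0\geq\sum_{k\geq1}\pi_k$).
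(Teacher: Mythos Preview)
Your argument is correct and takes a genuinely different route from the paper. The paper's proof is analytic: it writes down the KKT conditions for the problem of minimizing $\mathbb{P}_e(a_0,a_1,\ldots)$ under the power constraint, sums the stationarity conditions for $a_i$, $i\geq 1$, and compares with the condition for $a_0$ to conclude that the multiplier $\mu_0$ cannot vanish, forcing $a_0=0$. Your proof is constructive: you exhibit an explicit reallocation (move the constant $c$ from every `0' slot to every `1' slot), check that power is preserved because each bit is Bernoulli$(1/2)$, and then use $\pi_0>\pi_1$ to show the received Poisson mean moves in the favorable direction for both hypotheses simultaneously, so $P_{e|0}$ and $P_{e|1}$ each weakly decrease. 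This is more elementary---no Lagrangian machinery, no regularity conditions---and in fact slightly stronger, since it shows every $c>0$ scheme is dominated pointwise in the state, not merely that stationary points have $c=0$. The paper's approach, on the other hand, stays within the same KKT framework used for the lower bound in Theorem~\ref{th2} and does not need the per-slot case split. Your closing remark that for general memory $M_c$ the analogous step would require $\pi_0\geq\sum_{k\geq1}\pi_k$ is also a nice observation that the KKT route does not surface as transparently.
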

\begin{proof}
Assuming a constant rate of transmission for bit `0', we are interested in minimizing error probability, which can be calculated similar to Eq. (\ref{errorprob2}). We have
\begin{equation} 
\begin{split}
\mathbb{P}_e&(a_0,a_1,a_2,...) =\dfrac{1}{2}P_{e|1}+\dfrac{1}{2}P_{e|0}=\\
&\dfrac{1}{2}\sum_{i=0}^{\infty}p_i\left[e^{-(a_i+\dfrac{\pi_0}{\pi_1}a_{i+1})}\sum_{ y\leq T}\dfrac{(a_i+\dfrac{\pi_0}{\pi_1}a_{i+1})^y}{y!}\right]+ \\ 
& \dfrac{1}{2}\sum_{i=0}^{\infty}p_i\left[e^{-(a_i+\dfrac{\pi_0}{\pi_1}a_{0})}\sum_{y>T}\frac{(a_i+\pip a_0)^y}{y!}\right],
\label{th3,eq1}
\end{split}
\end{equation}
where $a_0$ represents the interference experienced at the receiver if the last transmitted bit is `0'. In order to prove the theorem, we show that for the values of $a_0$ greater than $0$, the necessary KKT conditions are not satisfied.
If we write the KKT conditions, we will have:
\begin{equation}
\begin{split}
P_0&e^{-a_0-\dfrac{\pi_0}{\pi_1}a_{1}} \dfrac{\left(a_0+\dfrac{\pi_0}{\pi_1}a_{1}\right)^T}{T!} -\\
& P_0 (1+\pip) e^{-(1+\pip) a_0} \dfrac{\left((1+\pip )a_{0}\right)^T}{T!}- \\ 
& \sum_{i=1}^{\infty} P_{i} \pip e^{-a_i-\pip a_0}\dfrac{\left(a_i+\pip a_0\right)^T}{T!}=-\mu_0+\lambda P_0.
\label{th3,eq2}
\end{split}
\end{equation}
Moreover,
\begin{equation}
\begin{split}
&P_ie^{-a_i-\dfrac{\pi_0}{\pi_1}a_{i+1}}\dfrac{\left(a_i+\dfrac{\pi_0}{\pi_1}a_{i+1}\right)^T}{T!} + \\
&P_{i-1}\dfrac{\pi_0}{\pi_1}e^{-a_{i-1}-\dfrac{\pi_0}{\pi_1}a_{i}}\dfrac{\left(a_{i-1}+ \pip a_i\right)^T}{T!} -\\
& P_{i}e^{-a_i-\pip a_0}\dfrac{\left(a_i+\pip a_0\right)^T}{T!}=-\mu_i+\lambda P_i \; \qquad \text{for}\; i= 1,2,\cdots .
\label{th3,eq3}
\end{split}
\end{equation}
Using Eq. \eqref{th3,eq3}, we show that $\mu_0$ in Eq. \eqref{th3,eq2} cannot be zero, implying that $a_0$ equals zero in the optimum solution.
Let us take a summation over $i$ in both sides of Eq. (\ref{th3,eq3}). We get
\begin{equation}
\begin{split}
&P_0 \pip e^{-a_0-\pip a_1}\dfrac{\left(a_0+\pip a_1\right)^T}{T!}-\\
&\sum_{i=1}^{\infty} P_i e^{-a_i-\pip a_0}\dfrac{\left(a_i+\pip a_0\right)^T}{T!}=\\
&\lambda \sum_{i=1}^{\infty} P_i -\sum_{i=1}^{\infty}\mu_i - \sum_{i=1}^{\infty} P_i (1+\pip) e^{-a_i-\pip a_{i+1}} \dfrac{\left(a_i+\pip a_{i+1}\right)^T}{T!}.
\label{th3,eq4}
\end{split}
\end{equation}
Since $\sum_{i=1}^{\infty}P_i=1/2$, we know that the LHS of Eq. (\ref{th3,eq4}) is less than $\lambda/2$. Therefore, by comparing Eq. (\ref{th3,eq2}) and Eq. (\ref{th3,eq4}) we can write
\begin{equation}
\begin{split}
&LHS \;of\; Eq.\; (\ref{th3,eq2}) =P_0e^{-a_0-\dfrac{\pi_0}{\pi_1}a_{1}}\dfrac{\left(a_0+\dfrac{\pi_0}{\pi_1}a_{1}\right)^T}{T!}-\\
&\sum_{i=1}^{\infty} P_{i} \pip e^{-a_i-\pip a_0}\dfrac{\left(a_i+\pip a_0\right)^T}{T!}-\\
& P_0(1+\pip)e^{-(1+\pip) a_{0}}\dfrac{\left((1+\pip) a_{0}\right)^T}{T!} \\
&\leq LHS\; of\; Eq.\;(\ref{th3,eq4}) = P_0 \pip e^{-a_0-\pip a_{1}}\dfrac{\left(a_0+\pip a_{1}\right)^T}{T!}-\\
&\sum_{i=1}^{\infty}P_ie^{-a_i-\pip a_0}\dfrac{\left(a_i+\pip a_0\right)^T}{T!} \leq \dfrac{\lambda}{2} = \lambda P_0.
\label{th3,eq5}
\end{split}
\end{equation}
We have shown that the LHS of Eq. (\ref{th3,eq2}) is less than $\lambda P_0$, so we can conclude that $\mu_0 \neq 0$ which means in all points satisfying KKT condition we should have $a_0 =0$ or equivalently we should send nothing for bit `0'.
\end{proof}
\section{Looking Inside Minimum Solution for Eq. (\ref{fixedDistMin})\label{fixeddist}}
We are interested in minimizing Eq. \ref{fixedDistMin} subject to $\mathbb{E}[f(I)]= K$. More specifically, for simplicity, let us assume that $I$ is a discrete R.V., taking values $a_1, a_2, \cdots, a_r \geq 0$ with probabilities $p_1, \cdots, p_r$ where $\sum_jp_j=1$. Let $f(a_j)=\dfrac{b_j}{\pi_0}\geq 0$, then we want to minimize
\begin{align}
\sum_{j=1}^rp_j\left[e^{-a_j-b_j}\sum_{0\leq y<T}\frac{(a_j+b_j)^y}{y!}\right] 
\label{fixed_int_3}
\end{align}
over all $b_j\geq 0$ subject to $\sum_{j}p_j b_j=\pi_0 K$. Since the constraint is linear, regularity conditions for KKT hold and from the KKT equations we can obtain:
\begin{align}
p_je^{-a_j-b_j}\frac{(a_j+b_j)^T}{T!}=-\mu_i+\lambda p_j
\label{fixed_KKT}
\end{align}
where $\mu_ib_i=0$, $\mu_i\geq 0$. The function $e^{-x}x^T/T!$ is increasing in $[0,T]$ and decreasing from $[T,\infty]$. Therefore, $e^{-x}x^T/T!=\lambda$ has two solutions $C_1$ and $C_2$ one of which is less than $T$ and the other greater than $T$. Thus, assuming that $b_i>0$, for some constant $\lambda\geq 0$, one of the following holds: 
\begin{itemize}
\item $b_j=0$ and $e^{-a_j}\frac{a_j^T}{T!}\leq \lambda$.
\item $a_j+b_j=\alpha_1<T$ where $e^{-\alpha_1}\frac{\alpha_1^T}{T!}= \lambda$.
\item $a_j+b_j=\alpha_2>T$ where $e^{-\alpha_2}\frac{\alpha_2^T}{T!}= \lambda$.
\end{itemize}
In this Appendix, we show that the minimization problem described in Eq. (\ref{fixed_int_3}) has two solutions.
First we show that there does not exists more than one value $a_j$ which maps to $b_j= C_1 - a_j$. Assume it is not true and for some $j_1$, $j_2$ we have $a_{j1} + b_{j1}=a_{j2} + b_{j2} = C_1$ where $b_{j2} > 0$. Let us increase $b_{j1}$ by $\epsilon/p_{j1}$ and decrease $b_{j2}$ by $\epsilon/p_{j2}$ in Eq. (\ref{fixed_int_3}), as a result the constraint on the average transmission rate, i.e., $\sum_j p_j \dfrac{b_j}{\pi_0}=K$ is kept constant. The first derivative of function we are minimizing in Eq. (\ref{fixed_int_3}), with respect to $\epsilon$ is
\begin{align}
\begin{split}
-&\left[ e^{-C_1-\epsilon/p_{j1}}\frac{(C_1+\epsilon/p_{j1})^T}{T!}\right]+\\
&\left[ e^{-C_1+\epsilon/p_{j2}}\frac{(C_1-\epsilon/p_{j2})^T}{T!}\right]
\end{split}
\end{align}
which is zero at $\epsilon=0$ (both terms are equal to $\lambda$). The second derivative at zero is
\begin{align}{\frac{1}{p_{j1}}\lambda\frac{C_1-T}{C_1}+\frac{1}{p_{j2}}\lambda\frac{C_1-T}{C_1}}\end{align}
which should be non-negative; but since $C_1$ is less than T, it is negative. Thus, this is a contradiction, indicating that $a_j+b_j=C_1$ does not happen for two $j$’s. So at most one point is mapped to $C_1$.
On the other hand, if $a_{j1} + b_{j1}=a_{j2} + b_{j2} = C_2$, the second derivative is always non-negative at zero and if we form the Hessian matrix it can be shown that it is positive semi-definite with all eigenvalues non-negative; Thus, $\forall j\in[1,2,...,r]:\, a_{j} + b_{j} = C_2$ is a local minimum. That is the transmitting function $b_j=f(a_j)=C_2-a_j$ , the proposed scheme, is locally optimum.
Assume that for some $j_1$ we have $a_{j1} + b_{j1}=C_1$ and for some $j_2$, $a_{j2} + b_{j2} = C_2$. Using the same method as for the previous case, it can be shown that a necessary condition for being a local minimum is
\begin{align}\frac{p_{j2}}{p_{j1}}\leq \frac{C_1}{C_2}\frac{C_2-T}{T-C_1}\end{align}
\\
We have shown that $\frac{C_1}{C_2}\frac{T-C_2}{C_1-T}<1$. Thus,
\[p_{j2}<p_{j1}\]
As has been shown above, only one $j$ can be mapped to $C_1$. So in addition to the case that all $j$s are mapped to $C_2$ (proposed scheme), there may exist other local minimums in which the largest $p_j$ is mapped to $C_1$ , i.e., $f(a_j)= C_1-a_j$, and other $p_j$s are mapped to $C_2$ , i.e., $b_i = C_2 - a_i$, for all $i \neq j$. 

\begin{thebibliography}{15}
\bibitem{Book1}
T. Nakano, A. W. Eckford, and T. Haraguchi, ``Molecular Communication", Cambridge University Press, 2013. 
\bibitem{Book2}
H. Sawai, ``Biological Functions for Information and Communication Technologies", Ohmsa Ltd, 2009. 
\bibitem{AAA}
L. Parcerisa Gine and I. F. Akyildiz ``Molecular Communication Options for Long Range Nanonetworks" \emph{Comput. Netw.}, vol. 53, no. 16, pp. 2753--2766, November 2009.
\bibitem{Pierboon2010}
M. Pierobon, and I. F. Akyildiz. "A physical end-to-end model for molecular communication in nanonetworks." {\em Selected Areas in Communications, IEEE Journal on} 28.4 (2010): 602-611.
\bibitem{Eckford2007}
A. W. Eckford, ``Nanoscale communication with brownian motion,'' in {\em
Information Sciences and Systems, 2007. CISS'07. 41st Annual Conference on},
pp.~160--165, IEEE, 2007.
\bibitem{Leeson2012}
M.~S. Leeson and M.~D. Higgins, ``Forward error correction for molecular
communications,'' {\em Nano Communication Networks}, vol.~3, no.~3,
pp.~161--167, 2012.
\bibitem{Atakan2012}
B.~Atakan, S.~Galmes, and O.~B. Akan, ``Nanoscale communication with molecular
arrays in nanonetworks,'' {\em NanoBioscience, IEEE Transactions on},
vol.~11, no.~2, pp.~149--160, 2012.
\bibitem{Ko2012}
P.-Y. Ko, Y.-C. Lee, P.-C. Yeh, C.-h. Lee, and K.-C. Chen, ``A new paradigm for
channel coding in diffusion-based molecular communications: Molecular coding
distance function,'' in {\em Global Communications Conference (GLOBECOM),
2012 IEEE}, pp.~3748--3753, IEEE, 2012.
\bibitem{Shih2012}
P.-J. Shih, C.-h. Lee, and P.-C. Yeh, ``Channel codes for mitigating
intersymbol interference in diffusion-based molecular communications,'' in
{\em Global Communications Conference (GLOBECOM), 2012 IEEE}, pp.~4228--4232,
IEEE, 2012.
\bibitem{AR}
H. Arjmandi, A. Gohari, M. Nasiri-Kenari and Farshid Bateni, ``Diffusion based nanonetworking: A new modulation technique and performance analysis," \emph{IEEE Communications Letters}, 17(4):645 - 648, 2013. 
\bibitem{MO}
R. Mosayebi, H. Arjmandi, A. Gohari, M. Nasiri-Kenari and U. Mitra, ``Receivers for Diffusion-Based Molecular Communication: Exploiting Memory and Sampling Rate" \emph{to be appeared in IEEE Journal of Selected Areas in Communications} 
\bibitem{SH}
A. Noel, K. C. Cheung and R. Schober, ``Optimal Receiver Design for Diffusive Molecular Communication with Flow and Additive Noise", arXiv:1308.0109.
\bibitem{fekri}A. Einolghozati, M. Sardari, A. Beirami, F. Fekri, ``Capacity of Discrete Molecular Diffusion Channels", arXiv:1105.1969.
\end{thebibliography}
\end{document}